\documentclass[conference,twocolumn]{IEEEtran}
\IEEEoverridecommandlockouts

\usepackage{fullpage,graphicx,psfrag,amsmath,amsfonts,verbatim}
\usepackage[small,bf]{caption}
\usepackage{amsthm}
\usepackage{mathtools,amssymb,color}

\newcommand{\reals}{\mathbb{R}}



\newcommand{\argmin}{\mathop{\rm argmin}}



\newcommand{\bX}{\mathbf{X}}

\newcommand{\bp}{\mathbf{p}}
\newcommand{\bx}{\mathbf{x}}
\newcommand{\bbP}{\mathbb{P}}

\newcommand{\bz}{\mathbf{z}}
\newcommand{\bQ}{\mathbf{Q}}

\bibliographystyle{ieeetr}

\allowdisplaybreaks

%

%
\ifCLASSINFOpdf
\else
\fi
\hyphenation{op-tical net-works semi-conduc-tor}

\newtheorem{thm}{Theorem}
\newtheorem{rmk}{Remark}

\begin{document}
%
\title{ Minimum HGR Correlation Principle: \\From Marginals to Joint Distribution}

\author{\IEEEauthorblockN{Farzan Farnia}
\IEEEauthorblockA{farnia@stanford.edu}
\thanks{This work is partially supported by a Stanford Graduate Fellowship, and the Center for Science of Information (CSoI), an NSF
Science and Technology Center, under grant agreement CCF-0939370.}
\and
\IEEEauthorblockN{Meisam Razaviyayn}
\IEEEauthorblockA{meisamr@stanford.edu}
\and
\IEEEauthorblockN{Sreeram Kannan}
\IEEEauthorblockA{ksreeram@ee.washington.edu}
\and
\IEEEauthorblockN{David Tse}
\IEEEauthorblockA{dntse@stanford.edu}}


%


\linespread{0.95}

\maketitle

\begin{abstract}
 Given  low order moment information over the random variables $\bX = (X_1,X_2,\ldots,X_p)$ and $Y$, what distribution  minimizes the Hirschfeld-Gebelein-R\'{e}nyi (HGR) maximal correlation coefficient between $\bX$ and $Y$, while remains faithful to the given moments?  The answer to this question is important especially in order to fit models over $(\bX,Y)$ with minimum dependence among the random variables $\bX$ and $Y$. In this paper, we investigate this question first in the continuous setting by showing that the jointly Gaussian distribution achieves the minimum HGR correlation coefficient among distributions with the given first and second order moments. Then, we pose a similar question in the discrete scenario by fixing the pairwise marginals of the random variables $\bX$ and $Y$. Subsequently, we derive a lower bound for the HGR correlation coefficient over the class of distributions with fixed  pairwise marginals. Then we show that this lower bound is tight if there exists a distribution with certain {\it additive } structure satisfying the given pairwise marginals. Moreover, the distribution with the additive structure achieves the minimum HGR correlation coefficient. Finally, we conclude by showing that the event of obtaining pairwise marginals containing an additive  structured distribution has a positive Lebesgue measure over the probability simplex.
\end{abstract}
\IEEEpeerreviewmaketitle

\section{Introduction}
A well-known measure of dependence between two random variables $X$ and $Y$ is the  Pearson correlation coefficient which is defined as
\[
\rho (X,Y) = \frac{{\rm Cov}(X,Y)}{\sqrt{{\rm Var}(X) {\rm Var}(Y)}},
\]
assuming that $0<{\rm Var}(X), {\rm Var}(Y) < \infty$. Clearly, this measure of dependence is zero if $X$ and $Y$ are independent; while the converse is not necessarily true in general. Furthermore, this measure of dependence fails in discovering nonlinear dependence among random variables.  A closely related measure of dependence, which was first introduced by Hirschfeld and Gebelein \cite{hirschfeld1935connection,gebelein1941statistische} and then studied by R\'{e}nyi \cite{renyi1959measures}, is the HGR maximal  correlation coefficient defined as: 
\begin{equation}
\label{eq:RenyiDef}
{\rho_{m}}(X,Y) \triangleq \sup_{f,g } \mathbb{E} \left[ f(X) g(Y)\right],
\end{equation}
where the maximization is taken over  the class of all measurable functions $f$ and $g$ with the property that $\mathbb{E} [f(X)] = \mathbb{E} [g(Y)] = 0$ and  $\mathbb{E} [f^2(X)] = \mathbb{E}[g^2(Y)] = 1$. 
The HGR  correlation coefficient has many natural properties one would look for as a measure of dependence. For example, the  correlation coefficient of two random variables $X$ and $Y$ is  normalized to be between  $0$ and $1$. Furthermore, this coefficient is zero if and only if the two random variables are independent; and it is one if there is a strict dependence between $X$ and $Y$; see \cite{renyi1959measures} for other interesting properties of the HGR correlation. 

In the classical prediction problems the task is to predict the value of a random variable $Y$ based on the observations on the random variable $X$. When the probability model on the random variables $X$ and $Y$ is not known,  the first step in prediction is to fit a probabilistic model on the random variables based on the knowledge obtained from training data or other sources. A popular approach to fit a model for inference/prediction is to use the maximum entropy principle \cite{jaynes1957information}. This principle states that given a set of constraints on the ground truth distribution, the distribution with the maximum (Shannon) entropy under those constraints is a {\it ``proper"} representer of the class. In practice, this idea can be implemented by estimating low order marginals from data and find a distribution with maximum entropy satisfying the low order marginals. The maximum entropy principle is in essence the sprit of {\it variational method} in  graphical models. Applying similar idea to the prediction problem, one approach is to find a distribution that maximizes the conditional entropy of the random variable $Y$ given  $X$ under a fixed set of marginals. Given the marginal distribution of $Y$,  this principle is equivalent to minimizing the mutual information between the two random variables $X$ and $Y$; see, e.g., \cite{globerson2004minimum}, for which no efficient computational approach is known in the literature for high dimensional problems. When HGR  correlation is used instead of mutual information, the problem of interest is to find the distribution over $(X,Y)$ with the minimum HGR correlation between the two random variables $X$ and $Y$ which stays faithful to the estimated low order marginals.  

To answer the question of finding the distribution with minimum HGR correlation coefficient, we start by the trivial inequality ${\rho_{m}}(X,Y) \geq |\rho (X,Y)|$ which holds for any two random variables $X$ and $Y$. Interestingly,   this inequality is tight when $X$ and $Y$ are jointly Gaussian \cite{gebelein1941statistische,lancaster1957some, renyi1959measures}. In other words,  among all possible distributions with fixed first and second order moments on $X$ and $Y$, the jointly Gaussian distribution on $(X,Y) $ achieves the minimum HGR correlation.   Similar question rises in the case where the observed random variable in prediction is in the vector form:   Let $\bX = (X_1,X_2, \ldots,X_p)$ be the observed random vector and $Y$ be the random variable denoting the prediction target. Since in many recent  prediction problems in statistics and machine learning, the random vector $\bX$ is in a  large dimensional space, estimating the complete joint distribution of $\bX$ and $Y$ is not computationally and statistically feasible from the data. Motivated by the variational approach in  graphical models,  one can estimate the first and second order moments of $(\bX,Y)$ and fit a model which minimizes the dependence between $\bX$ and $Y$ while stay faithful to the constraints on the first and second order moments. Therefore, the question of interests in this paper is as follows:

\noindent{\bf Question:} Given the first and second order moments of the two random variables $\bX=(X_1,\ldots,X_p)$ and $Y$, what is the distribution with minimum HGR  correlation coefficient between $\bX$ and $Y$?  

In this work, we investigate this question in both continuous and discrete scenarios. In particular, in section~\ref{sec:continuous} we answer  this question for the continuous case where both of the random variables $\bX$ and $Y$ are continuous. Then, we cast a similar question in the discrete setting and partially answer it by first finding a lower bound for the minimum HGR correlation coefficient. Then, we show that, under a certain {\it additive structure condition}, this lower bound is tight and  can be achieved through a certain probability distribution satisfying the pairwise constraints.

\section{Minimum HGR correlation distribution: continuous setting}
\label{sec:continuous}
Let us assume that the first order moment $\mathbb{E} [(\bX \; \;Y)] = \boldsymbol{\mu} \in \reals^{p+1}$ and the second order moment $\mathbb{E} [(\bX \;\;Y)^T  (\bX\;\; Y)] = \boldsymbol{\Lambda} \in \reals^{(p+1) \times (p+1)}$  are given. 
Our goal is to find the probability distribution $\bbP^*_{\bX,Y}(\bx,y)$ on the random variables $\bX$ and $Y$ which is the solution to the following optimization problem:
%
%
%
%
\begin{equation}
\label{eq:RenyiMinCont}
\begin{split}
\min_{\mathbb{P}_{\bX,Y}} \quad &{\rho_{m}} (\bX,Y) \\
{\rm s.t.}\quad &\mathbb{E}_{\bbP}[(\bX \;\;Y)] =  \boldsymbol{\mu}\\
&\mathbb{E}_\bbP [(\bX \;\;Y)^T  (\bX\;\; Y)] = \boldsymbol{\Lambda}.
\end{split}
\end{equation}
The following simple result shows that the joint Gaussian distribution is always a solution of \eqref{eq:RenyiMinCont}. The result and its proof is straightforward, but since we could not find it in the literature, we state it here. This result  plus the additive property of the Gaussian distribution, which will be explained later,  shed light on the discrete scenario as well. 

%
%
\begin{thm}
Let $\bbP^G_{\bX,Y}$ be the Gaussian probability distribution defined over $\bX$ and $Y$ with the given first and second order moments $\boldsymbol{\mu}$ and $\boldsymbol{\Lambda}$. Then $\bbP^G_{\bX,Y}$ is a minimizer of \eqref{eq:RenyiMinCont}, i.e., 
\begin{align}
\bbP^G_{\bX,Y} \in \argmin_{\bbP_{\bX,Y} } \quad & {\rho_{m}} (\bX,Y) \nonumber\\
{\rm s.t.} \quad&\mathbb{E}_{\bbP}[(\bX \;\;Y)] =  \boldsymbol{\mu} \nonumber\\
&\mathbb{E}_\bbP [(\bX \;\;Y)^T  (\bX\;\; Y)] = \boldsymbol{\Lambda} \nonumber
\end{align}
\end{thm}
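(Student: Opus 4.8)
The plan is to reduce the vector problem to the classical scalar identity $\rho_m=|\rho|$ for bivariate Gaussians, in two steps: (i)~exhibit a lower bound on $\rho_m(\bX,Y)$ that depends only on $(\boldsymbol{\mu},\boldsymbol{\Lambda})$, and (ii)~check that $\bbP^G_{\bX,Y}$ attains it. Write $\boldsymbol{\Sigma}=\boldsymbol{\Lambda}-\boldsymbol{\mu}\boldsymbol{\mu}^T$ for the covariance implied by the constraints and partition it into the blocks $\Sigma_{\bX\bX}$, $\Sigma_{\bX Y}$, $\sigma_Y^2$; assume for now $\Sigma_{\bX\bX}\succ 0$ and $\sigma_Y^2>0$. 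Let $\beta=\Sigma_{\bX\bX}^{-1}\Sigma_{\bX Y}$ and $R=\sqrt{\Sigma_{Y\bX}\Sigma_{\bX\bX}^{-1}\Sigma_{\bX Y}}/\sigma_Y$, the multiple correlation coefficient, a function of the given moments alone. For the lower bound, observe that for \emph{any} feasible $\bbP_{\bX,Y}$ the functions $g(Y)=(Y-\mu_Y)/\sigma_Y$ and $f(\bX)=\langle\beta,\bX-\boldsymbol{\mu}_{\bX}\rangle/\sqrt{\beta^T\Sigma_{\bX\bX}\beta}$ are admissible in \eqref{eq:RenyiDef}, and a one-line computation using only second-order moments gives $\mathbb{E}[f(\bX)g(Y)]=R$; hence $\rho_m(\bX,Y)\ge R$ for every distribution satisfying the constraints. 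This is precisely the vector analogue of the elementary inequality $\rho_m\ge|\rho|$.

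For achievability I would show $\rho_m(\bX,Y)=R$ when $(\bX,Y)\sim\bbP^G_{\bX,Y}$. By the Gaussian regression identity, $Y=\beta^T\bX+W$ with $W\sim\mathcal{N}\!\big(0,\sigma_Y^2(1-R^2)\big)$ independent of $\bX$, so the scalar $U:=\beta^T\bX$ renders $\bX$ and $Y$ conditionally independent, i.e.\ $\bX-U-Y$ is a Markov chain. Consequently, for any admissible $f,g$, conditioning on $U$ gives $\mathbb{E}[f(\bX)g(Y)]=\mathbb{E}\big[\mathbb{E}[f(\bX)\mid U]\,g(Y)\big]$, and since $\bar f(U):=\mathbb{E}[f(\bX)\mid U]$ has zero mean and variance at most $1$, this quantity is at most $\rho_m(U,Y)$; the reverse inequality $\rho_m(U,Y)\le\rho_m(\bX,Y)$ is immediate because $U$ is a function of $\bX$. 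Thus $\rho_m(\bX,Y)=\rho_m(U,Y)$, and since $(U,Y)$ is jointly Gaussian with $\rho(U,Y)=R$, the classical result $\rho_m=|\rho|$ for bivariate Gaussians \cite{gebelein1941statistische,lancaster1957some,renyi1959measures} yields $\rho_m(U,Y)=R$. Equivalently, after whitening one can expand $f$ and $g$ in Hermite polynomials and read off directly that the supremum equals $R$.

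Combining the two parts, every feasible distribution satisfies $\rho_m(\bX,Y)\ge R$, with equality for $\bbP^G_{\bX,Y}$, so $\bbP^G_{\bX,Y}$ is a minimizer of \eqref{eq:RenyiMinCont}. The only non-routine step is achievability, whose crux is the observation that for a Gaussian \emph{all} the dependence between $\bX$ and $Y$ is funneled through the single linear feature $\beta^T\bX$; this collapses the $p$-dimensional optimization to the known bivariate Gaussian computation, so I expect this to be the main obstacle and the lower bound to be essentially free. A minor loose end is the degenerate situation in which $\Sigma_{\bX\bX}$ is singular or $R=0$: there one replaces the inverse by a pseudoinverse (equivalently, restricts to the affine hull of the support of $\bX$), and when $R=0$ the Gaussian makes $\bX$ and $Y$ independent, so $\rho_m=0$ is trivially optimal.
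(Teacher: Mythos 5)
Your proposal is correct and follows essentially the same route as the paper: the same linear-regression lower bound (your $\beta$ is the paper's $\mathbf{a}$ and your $R$ is its $\sqrt{{\rm Var}(\mathbf{a}^T\bX)/{\rm Var}(Y)}$), and the same achievability argument via the Markov chain $\bX - \beta^T\bX - Y$ reducing to the bivariate Gaussian identity $\rho_m=|\rho|$. Your treatment of the degenerate cases is a small bonus the paper omits.
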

\begin{proof}
Under the given first and second order moments, the random variable $Y$ can be written as
\begin{equation} \label{SumFormulation}
Y-{{\mu}}_{Y}={\mathbf{a}}^T (\mathbf{X}-{{\boldsymbol{\mu}}}_{\mathbf{X}})+ Z,
\end{equation}
for some vector $\mathbf{a}\in \reals^p$ with the random variable $Z$ being uncorrelated with  $\mathbf{X}$.  Here $\mu_Y$ and $\boldsymbol{\mu}_\mathbf{X}$ are the given first order moments of $Y$ and $\bX$, respectively. Notice that the  vector $\mathbf{a}$ and thus $\text{\rm Var}\left({{\mathbf{a}}^T \mathbf{X}}\right)$ are completely determined by the first and second order marginals $\boldsymbol{\mu}$ and $\boldsymbol{\Lambda}$. Therefore, taking the functions $f$ and $g$ to be
\begin{equation}
f(\mathbf{X})= \frac{ {\mathbf{a}}^T (\mathbf{X}-\boldsymbol{\mu}_\bX) } { \sqrt{{\rm Var}\left({{\mathbf{a}}^T \mathbf{X}}\right)}},\quad 
g(Y)=\frac{Y- \mu_Y}{\sqrt{{\rm Var}(Y)}}
\end{equation}
in the HGR correlation definition \eqref{eq:RenyiDef} leads to the following lower bound for the minimum value of \eqref{eq:RenyiMinCont}:
\begin{equation}
\label{eq:LowBdTemp}
\mathbb{E}\left(f(\mathbf{X})g(Y)\right)=\sqrt{\frac{{\rm Var}\left({{\mathbf{a}}^T \mathbf{X}}\right)}{{\rm Var}(Y)}}.
\end{equation}
Hence the value of the obtained lower bound \eqref{eq:LowBdTemp} only depends on the first and second order moments of $(\bX,Y)$. Now we show that the jointly Gaussian distribution with  the given first and second order moments achieves the above lower bound. Under the jointly Gaussian distribution, the random variable $Z$ in \eqref{SumFormulation} becomes independent from  $\bX$ and therefore
\begin{equation}
\nonumber
(X_1,\ldots , X_p) \, \stackrel{\rule{0.3cm}{0.5pt}}{} \, {{\mathbf{a}}^T \mathbf{X}} \, \stackrel{\rule{0.3cm}{0.5pt}}{} \, Y
\end{equation}
forms a Markov chain.  Hence,  using the alternative conditional definition of the HGR  correlation \cite{anantharam2013maximal}, we obtain
\begin{equation}
\nonumber
\begin{split}
&\;{\rho_{m}}({\mathbf{X};Y})  = \max \; \;{\mathbb{E}\left( \mathbb{E}^2(g(Y)\vert \mathbf{X})\right)} \\
& = \max \;\;{\mathbb{E}\left( \mathbb{E}^2(g(Y)\vert \mathbf{X},\mathbf{a}^T\mathbf{X})\right)} \\
& = \max\;\; {\mathbb{E}\left( \mathbb{E}^2(g(Y)\vert \mathbf{a}^T\mathbf{X})\right)}  = {\rho_{m}}\left({\mathbf{a}^T\mathbf{X};Y}\right),
\end{split}
\end{equation}
where the maximizations are taken over the functions $g(\cdot)$ satisfying $\mathbb{E}(g(Y)) =0\, \text{\rm and}\, \mathbb{E}(g^2(Y)) =1$. 
Clearly, $(\mathbf{a}^T\mathbf{X},Y)$ is distributed according to a jointly Gaussian distribution. Thus,
$
{\rho_{m}}({\mathbf{X};Y}) = {\rho_{m}}\left({\mathbf{a}^T\mathbf{X};Y}\right) = \sqrt{\frac{{\rm Var}\left({{\mathbf{a}}^T \mathbf{X}}\right)}{{\rm Var}(Y)}},
$
which completes the proof.
\end{proof}
Notice that the Gaussian distribution has the property that $\mathbb{E} \left[ Y \big| \bX \right] = \sum_{i=1}^pf_i(X_i)$ for some linear functions $f_i$. As will be seen in the next section, this additive model property \cite{friedman1981projection,buja1989linear,hastie2009elements} plays an important role in the  discrete scenario as well. 

%
%
%
\section{Minimum HGR correlation distribution: discrete setting}
Let us consider the scenario where the random variables $\bX$ and $Y$ are discrete. Here, motivated by the standard binary classification problem in machine learning, we assume that the random variable $\bX  = (X_1,X_2,\ldots,X_p)\in \mathbb{X}^p$ is in a categorical structure  with $|\mathbb{X}|<\infty$, while the random variable $Y \in \mathbb{Y} \triangleq \{0,1\}$ is binary. 
Since  the random variables are categorical, the relabeling of the alphabets $\mathbb{X}$ and $\mathbb{Y}$ should not affect any model fitting approach. Noticing that the first and second order moments of  $(\bX,Y)$  are not invariant to relabeling of the alphabets, we   consider the class of distributions with a fixed given pairwise marginal distribution of $(X_1,\ldots,X_p,Y)$ instead of fixing the moments.  This modification is equivalent to fixing the first and second order moments of the indicator/dummy variables  \cite{suits1957use} defined over our categorical data. More precisely, defining the random vector $\widetilde{\bX} = (\widetilde{X}_i^x)_{i,x} $ with $\widetilde{X}_i^x = 1$ if $X_i = x$ and  $\widetilde{X}_i^x = 0$ otherwise,  the first and second order moment knowledge on $\widetilde{\bX}$ is equivalent to  the pairwise marginals knowledge on  $X_1,\ldots,X_p$. 
Therefore, finding the distribution with the minimum HGR correlation coefficient between $\bX$ and $Y$ can be formally stated as 
\begin{equation}
\label{eq:MinRenyiPairwise}
\begin{split}
\min_{\mathbb{P}_{\bX,Y}} \quad {\rho_{m}} (\bX,Y) \quad \quad {\rm s.t.}\quad \bbP_{\bX,Y} \in \mathcal{C},
\end{split}
\end{equation}
where $\mathcal{C}$ is the class of distributions with given pairwise marginals defined as
\begin{align}
& \mathcal{C} = \bigg\lbrace \bbP_{\mathbf{X},Y}\, \bigg|  \,\bbP(X_i = x_i, X_j = x_j) = \mu_{x_i x_j}^{ij}, \nonumber\\
& \bbP(X_i = x_i, Y = y) = \mu_{x_i y}^{i}, \;\forall x_i,x_j \in \mathbb{X},\;\forall y \in \mathbb{Y}, \, \forall i,j \bigg\rbrace. \nonumber
\end{align}

The optimization problem \eqref{eq:MinRenyiPairwise} is convex in terms of the joint distribution $\mathbb{P}_{\mathbf{X},Y}$; however, the number of variables grows exponentially in $p$. To deal with this exponential computational complexity and 
finding the solution of \eqref{eq:MinRenyiPairwise} indirectly, let us first define a lower bound on the HGR correlation coefficient between the two random variables $\bX$ and $Y$ with a given joint probability distribution $\bbP_{\bX,Y}$:  
\begin{equation}
\label{eq:RlbDef}
\begin{split}
\rho_{m}^{\rm lb}(\bX,Y) \triangleq \max_{f,g}  \;\; &\mathbb{E} [f(\bX) g(Y)] \\
{\rm s.t. }\quad   & f \in \mathcal{F}, \\
& \mathbb{E}_\bbP [f(\bX)] = \mathbb{E}_\bbP [g(Y)] = 0, \\
  &\mathbb{E}_\bbP [f^2(\bX)] = \mathbb{E}_\bbP[g^2(Y)] = 1, 
  \end{split}
\end{equation}
where $\mathcal{F}$ is the class of separable functions defined as
\[
\mathcal{F} \triangleq  \left\{f \,\big|\, f(\bX) = \sum_{i=1}^p \xi_i(X_i)\;{\rm with}\; \xi_i:\mathbb{X}\mapsto \reals\right\}.
\]
Clearly, $\rho_{m}^{\rm lb} (\bX,Y) \leq {\rho_{m}}(\bX,Y)$ since $\rho_{m}^{\rm lb} (\bX,Y)$  is obtained by restricting the feasible set in \eqref{eq:RenyiDef}.  The following theorem shows that the value of $\rho_{m}^{\rm lb} (\bX,Y)$ is efficiently computable.

\begin{thm}
\label{thm:RenyiLB}
Suppose $\mathcal{C} \neq \emptyset $ and let us without loss of generality assume that $ \mathbb{X} = \{1,2,\ldots,m\}$. Then
\begin{equation} \label{WCRC_new: EquiQuad}
\rho_{m}^{\rm lb}(\bX,Y) =  \sqrt{1 - \frac{\gamma^{\rm lb}}{\bbP(Y=0) \bbP(Y=1)}},
\end{equation}
where  
\begin{align}
\gamma^{\text{\rm lb}} \triangleq \min_{\bz \in \reals^{pm\times 1}}\;\; \bz^T \bQ \bz + \mathbf{d}^T \bz + \frac{1}{4}, \label{eq:QuadraticProblem}
\end{align}
with
$\bQ \in \reals^{pm\times pm}$ and  $\mathbf{d} \in \reals^{pm\times 1}$  are defined as 
\begin{align}
&\bQ_{m(i-1)+k,m(j-1)+\ell} =\bbP(X_i = k,X_j = \ell), \nonumber \\
&\mathbf{d}_{m(i-1)+k} =\bbP(X_i = k,Y=1) - \bbP(X_i=k, Y = 0), \nonumber
\end{align}
for every $i,j = 1,\ldots,p$ and $k,\ell = 1,\ldots,m$.
\end{thm}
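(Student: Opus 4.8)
The plan is to exploit the fact that $Y$ is binary in order to pin down the optimal $g$. Any $g\colon\mathbb{Y}\to\reals$ with $\mathbb{E}[g(Y)]=0$ and $\mathbb{E}[g^2(Y)]=1$ equals, up to an overall sign, $g_0(Y)\triangleq(Y-\bbP(Y=1))/\sqrt{\bbP(Y=0)\bbP(Y=1)}$; hence the inner supremum over $g$ in \eqref{eq:RlbDef} is trivial and $\rho_m^{\rm lb}(\bX,Y)=\max_{f}\,\lvert\mathbb{E}[f(\bX)g_0(Y)]\rvert$, the maximum being over $f\in\mathcal{F}$ with $\mathbb{E}[f(\bX)]=0$ and $\mathbb{E}[f^2(\bX)]=1$. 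Next I would make the parametrization of $\mathcal{F}$ explicit: writing $f(\bX)=\sum_{i=1}^{p}\xi_i(X_i)$ and collecting the values $\xi_i(k)$ into $\bz\in\reals^{pm}$ with $\bz_{m(i-1)+k}=\xi_i(k)$, a direct expansion gives $\mathbb{E}[f^2(\bX)]=\bz^T\bQ\bz$, and, using $Y\in\{0,1\}$ so that $\mathbb{E}[\mathbf{1}\{X_i=k\}(2Y-1)]=\bbP(X_i=k,Y=1)-\bbP(X_i=k,Y=0)$, also $\mathbb{E}[f(\bX)(2Y-1)]=\mathbf{d}^T\bz$. Note that $\bQ$ is a second-moment matrix, hence positive semidefinite, so \eqref{eq:QuadraticProblem} is a convex quadratic program whose minimum is attained; and $\bQ$, $\mathbf{d}$, and $\bbP(Y=y)$ involve only the prescribed pairwise marginals, so $\gamma^{\rm lb}$ is well defined independently of which $\bbP\in\mathcal{C}$ is chosen (some such $\bbP$ exists since $\mathcal{C}\neq\emptyset$).

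The computational core is the identity, obtained by completing the square and using $(Y-\tfrac12)^2\equiv\tfrac14$,
\[
\bz^T\bQ\bz+\mathbf{d}^T\bz+\tfrac14=\mathbb{E}\bigl[\bigl(f(\bX)+Y-\tfrac12\bigr)^2\bigr],
\]
so that $\gamma^{\rm lb}=\min_{f\in\mathcal{F}}\mathbb{E}[(f(\bX)+Y-\tfrac12)^2]$ is the residual error of the best separable least-squares approximation of $\tfrac12-Y$. To evaluate it, I would split $f=c+\tilde{f}$ into a constant part and a mean-zero part (both in $\mathcal{F}$, which is a linear space containing the constants) and write $Y-\tfrac12=(Y-\bbP(Y=1))+(\bbP(Y=1)-\tfrac12)$. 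The constant $c$ is free and optimally cancels $\bbP(Y=1)-\tfrac12$, leaving $\gamma^{\rm lb}=\min_{\tilde{f}}\mathbb{E}[(\tilde{f}(\bX)+(Y-\bbP(Y=1)))^2]$ over mean-zero $\tilde{f}\in\mathcal{F}$. Writing $\tilde{f}=t\hat{f}$ with $\|\hat{f}\|_2=1$ and using $\mathbb{E}[\hat{f}(\bX)(Y-\bbP(Y=1))]=\sqrt{\bbP(Y=0)\bbP(Y=1)}\,\mathbb{E}[\hat{f}(\bX)g_0(Y)]$, the objective becomes $t^2+2t\sqrt{\bbP(Y=0)\bbP(Y=1)}\,\mathbb{E}[\hat{f}g_0]+\bbP(Y=0)\bbP(Y=1)$; minimizing over the direction and sign of $\hat{f}$, which makes $\mathbb{E}[\hat{f}g_0]=-\rho_m^{\rm lb}(\bX,Y)$ by the first paragraph, and then over $t\ge0$ yields $\gamma^{\rm lb}=\bbP(Y=0)\bbP(Y=1)\bigl(1-\rho_m^{\rm lb}(\bX,Y)^2\bigr)$, which is \eqref{WCRC_new: EquiQuad} after rearranging.

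The step I expect to need the most care is the normalization bookkeeping: \eqref{eq:RlbDef} constrains $f$ to have zero mean and unit variance, whereas \eqref{eq:QuadraticProblem} minimizes freely over all of $\reals^{pm}$, and one must check that these agree. The reformulation $\gamma^{\rm lb}=\min_{f\in\mathcal{F}}\mathbb{E}[(f(\bX)+Y-\tfrac12)^2]$ is what resolves this cleanly — the constant and scaling freedoms in $f$ are exactly the degrees of freedom needed to pass between the two formulations, and the additive $\tfrac14$ in \eqref{eq:QuadraticProblem} is precisely $\mathbb{E}[(Y-\tfrac12)^2]$, which together with $\bbP(Y=0)+\bbP(Y=1)=1$ produces the denominator $\bbP(Y=0)\bbP(Y=1)$ in \eqref{WCRC_new: EquiQuad} rather than some other quantity. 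One should also dispose of the degenerate case $\bbP(Y=0)\bbP(Y=1)=0$, where $Y$ is almost surely constant and the maximal correlation is not defined; this is implicitly excluded, and outside of it the argument above is complete.
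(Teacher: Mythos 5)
Your proof is correct, but it takes a genuinely different route from the paper's. After fixing $g$ the same way you do, the paper treats the restricted maximization as a quadratically constrained program in the coefficient vector $\mathbf{a}$ (maximize $\mathbf{a}^T\mathbf{d}'$ subject to $\mathbf{a}^T\bQ\mathbf{a}\le 1$ and $\mathbf{a}^T\mathbb{E}[\mathbf{w}]=0$), solves it by Lagrangian duality, and evaluates the dual in closed form via the Moore--Penrose pseudoinverse; this requires verifying boundedness of the Lagrangian, the identities $\mathbb{E}[\mathbf{w}]^T\bQ^{\dagger}\mathbf{d}'=0$ and $\mathbb{E}[\mathbf{w}]^T\bQ^{\dagger}\mathbb{E}[\mathbf{w}]=1$, and finally the separate formula $\gamma^{\rm lb}=\tfrac14\left(1-\mathbf{d}^T\bQ^{\dagger}\mathbf{d}\right)$. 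You instead recognize $\gamma^{\rm lb}$ directly as the population least-squares residual $\min_{f\in\mathcal{F}}\mathbb{E}\bigl[\bigl(f(\bX)+Y-\tfrac12\bigr)^2\bigr]$ --- in effect promoting the paper's Remark~\ref{Remark: QuadraticEquivalenceIndicator} from an aside to the engine of the proof --- and evaluate it by splitting $f$ into its constant and centered parts, which reduces everything to the definition of $\rho_{m}^{\rm lb}$. Your route is more elementary and self-contained: it avoids strong duality and the pseudoinverse bookkeeping entirely, and the normalization mismatch you flag (constrained $f$ in \eqref{eq:RlbDef} versus unconstrained $\bz$ in \eqref{eq:QuadraticProblem}) is exactly where the paper also spends its effort, just in dual rather than primal form. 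What the paper's computation buys in exchange is the explicit closed form $\rho_{m}^{\rm lb}=\sqrt{\mathbf{d}'^T\bQ^{\dagger}\mathbf{d}'}$ as a byproduct. Your handling of the degenerate directions (on the null space of $\bQ$ the function $f$ vanishes almost surely, so the linear term vanishes too and the quadratic is bounded below and attains its minimum) and of the case $\bbP(Y=0)\bbP(Y=1)=0$ is adequate.
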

\begin{proof}
The proof is provided in the Appendix. 
\end{proof}
\begin{rmk}\label{Remark: QuadraticEquivalenceIndicator}
Assume the pairwise marginals are estimated from a given dataset containing $n$ data points.  Let $\mathbf{w}^j \in \{0,1\}^{mp\times 1}$ be the indicator of the $j$-th datapoint with $(\mathbf{w}^j)_{m(i-1)+ k} = 1$ if $X_i = k$ in the $j$-th datapoint and $(\mathbf{w}^j)_{m(i-1)+ k} = 0$ otherwise.  Define the vector $\mathbf{b} \in \{-\frac{1}{2}, \frac{1}{2}\}^{n \times 1}$ with $\mathbf{b}_j = \frac{1}{2}$ if the random variable $Y = 1$ in the $j$-th datapoint and $\mathbf{b}_j = -\frac{1}{2}$, otherwise. Then the  optimization problem~\eqref{eq:QuadraticProblem}, is equivalent to the following least squares regression problem
\begin{align}
\min_{\bz}\quad \|\mathbf{W}\bz - \mathbf{b}\|_2^2, \nonumber
\end{align}
where $\mathbf{W} = [\mathbf{w}^1  \mathbf{w}^2 \ldots \mathbf{w}^n]^T$.
\end{rmk}
Theorem~$\ref{thm:RenyiLB}$ simply states that the lower bound $\rho_{m}^{\rm lb}(\bX,Y)$ is easily computable by solving a convex optimization problem; see \eqref{WCRC_new: EquiQuad}, \eqref{eq:QuadraticProblem}. Moreover, this lower bound only depends on the pairwise marginals of the distribution of $\bbP_{\bX,Y}$. Consequently, $\rho_{m}^{\rm lb}(\bX,Y)$ is the same across all distributions in $\mathcal{C}$ and therefore it is well-defined to denote $\rho_{m}^{{\rm lb},\mathcal{C}}$ as the lower bound $\rho_{m}^{\rm lb}(\bX,Y)$ achieved by any of the  distributions in $\mathcal{C}$. Furthermore, this quantity is also a lower bound for the optimum value of \eqref{eq:MinRenyiPairwise}, i.e., 
$
\rho_{m}^{{\rm lb},\mathcal{C}} \leq \displaystyle{\min_{\mathbb{P}_{\bX,Y}\in \mathcal{C}} }\quad {\rho_{m}} (\bX,Y).
$

The following theorem provides an interpretative necessary and sufficient condition under which this lower bound is tight. Subsequently, in Theorem \ref{thm:SuffNecCond}, we provide a computationally efficient  approach to verify this condition.

\begin{thm}
\label{thm:SuffNecCond_part1}
The achieved lower-bound $\rho_{m}^{{\rm lb},\mathcal{C}} $  is tight, i.e.,
\[
\rho_{m}^{{\rm lb},\mathcal{C}} = \min_{\bbP_{\bX,Y}\in \mathcal{C}} {\rho_{m}}(\bX,Y),
\]
 if and only if there exists a probability distribution $\bbP \in \mathcal{C}$ for which the conditional expectation $\mathbb{E}_\bbP \left[Y\, \big| \, \bX\right]$ takes a separable form, i.e. $f_i$'s exist such that
\begin{equation}
\mathbb{E}_\bbP \left[Y\, \big| \, \bX\right] = \sum_{i=1}^p f_i(X_i).
\end{equation}
\end{thm}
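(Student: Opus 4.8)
The plan is to recast both ${\rho_{m}}(\bX,Y)$ and $\rho_{m}^{\rm lb}(\bX,Y)$ as norms of an $L^{2}$ projection, exploiting that $Y$ is binary. (We may assume $0<\bbP(Y{=}1)<1$; otherwise ${\rho_{m}}\equiv 0$ on $\mathcal{C}$ and $\mathbb{E}_{\bbP}[Y\mid\bX]$ is constant, so both sides of the asserted equivalence hold trivially.) Every admissible $g$ in \eqref{eq:RenyiDef} is a zero-mean, unit-variance function on the two-point set $\mathbb{Y}=\{0,1\}$ and is therefore forced, up to an irrelevant sign, to be the standardized indicator $g(Y)=\bigl(Y-\bbP(Y{=}1)\bigr)\big/\sqrt{\bbP(Y{=}0)\,\bbP(Y{=}1)}$. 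Hence for a fixed joint law $\bbP_{\bX,Y}$ the remaining maximization over $f$ in \eqref{eq:RenyiDef} is $\sup_{f}\mathbb{E}[f(\bX)\,u(\bX)]$ over zero-mean unit-variance $f$, where $u(\bX):=\mathbb{E}_{\bbP}[g(Y)\mid\bX]$ is an affine image, with nonzero slope, of $\mathbb{E}_{\bbP}[Y\mid\bX]$; since $\mathbb{E}_{\bbP}[u(\bX)]=0$, this supremum equals $\lVert u\rVert_{L^{2}(\bbP)}$, attained at $f=u/\lVert u\rVert$. Repeating the argument for \eqref{eq:RlbDef}, where $f$ is confined to the linear subspace $\mathcal{F}$ (note $\mathcal{F}$ already contains the constants), gives $\rho_{m}^{\rm lb}(\bX,Y)=\lVert \Pi_{\mathcal{F}}u\rVert_{L^{2}(\bbP)}$ with $\Pi_{\mathcal{F}}$ the orthogonal projection onto $\mathcal{F}$. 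These two identities are the backbone.

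Next I would invoke the pointwise dichotomy: since $\mathcal{F}$ is a finite-dimensional, hence closed, subspace of $L^{2}(\bbP)$, we have $\lVert\Pi_{\mathcal{F}}u\rVert=\lVert u\rVert$ iff $u\in\mathcal{F}$. As $u$ is an affine function of $\mathbb{E}_{\bbP}[Y\mid\bX]$ with nonzero slope and $\mathcal{F}$ is a linear subspace containing the constants, this reads, for every fixed $\bbP$,
\[
\rho_{m}^{\rm lb}(\bX,Y)={\rho_{m}}(\bX,Y)\ \Longleftrightarrow\ \mathbb{E}_{\bbP}[Y\mid\bX]\in\mathcal{F},
\]
i.e.\ precisely when $\mathbb{E}_{\bbP}[Y\mid\bX]$ is ($\bbP$-a.s.)\ of the separable form $\sum_{i=1}^{p}f_{i}(X_{i})$, in which case the projection $\Pi_{\mathcal{F}}u$ itself furnishes the $f_{i}$'s.

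To globalize I would use the remarks preceding the theorem: $\rho_{m}^{\rm lb}(\bX,Y)$ equals the common value $\rho_{m}^{{\rm lb},\mathcal{C}}$ for every $\bbP\in\mathcal{C}$, and $\rho_{m}^{{\rm lb},\mathcal{C}}\le\min_{\bbP\in\mathcal{C}}{\rho_{m}}(\bX,Y)$. For the ``if'' direction, take $\bbP_{0}\in\mathcal{C}$ with $\mathbb{E}_{\bbP_{0}}[Y\mid\bX]$ separable; the displayed equivalence gives ${\rho_{m}}(\bX,Y)=\rho_{m}^{\rm lb}(\bX,Y)=\rho_{m}^{{\rm lb},\mathcal{C}}$ under $\bbP_{0}$, and since also $\rho_{m}^{{\rm lb},\mathcal{C}}\le\min_{\bbP\in\mathcal{C}}{\rho_{m}}(\bX,Y)$, the minimum must equal $\rho_{m}^{{\rm lb},\mathcal{C}}$. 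For the ``only if'' direction, let $\bbP^{\star}\in\mathcal{C}$ attain the minimum -- it is attained because Step~1 gives ${\rho_{m}}(\bX,Y)^{2}=1-\mathbb{E}_{\bbP}[{\rm Var}(Y\mid\bX)]\big/\bigl(\bbP(Y{=}0)\bbP(Y{=}1)\bigr)$, a continuous function of $\bbP$ on the compact polytope $\mathcal{C}$ -- so that ${\rho_{m}}(\bX,Y)=\rho_{m}^{{\rm lb},\mathcal{C}}=\rho_{m}^{\rm lb}(\bX,Y)$ under $\bbP^{\star}$, whence the equivalence makes $\mathbb{E}_{\bbP^{\star}}[Y\mid\bX]$ separable.

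I expect the main obstacle to be the heart of Step~2: showing that \emph{equality} of the restricted and unrestricted correlations forces $\mathbb{E}_{\bbP}[Y\mid\bX]$ to lie \emph{exactly} in $\mathcal{F}$ rather than merely close to it. This is precisely the strict contractivity of $\Pi_{\mathcal{F}}$ off $\mathcal{F}$, or equivalently the Pythagorean identity $\mathbb{E}_{\bbP}[(Y-h(\bX))^{2}]=\mathbb{E}_{\bbP}[(Y-\mathbb{E}_{\bbP}[Y\mid\bX])^{2}]+\mathbb{E}_{\bbP}[(\mathbb{E}_{\bbP}[Y\mid\bX]-h(\bX))^{2}]$ for $h\in\mathcal{F}$, which identifies ${\rho_{m}}(\bX,Y)^{2}-\rho_{m}^{\rm lb}(\bX,Y)^{2}$ with the nonnegative gap $\gamma^{\rm lb}-\mathbb{E}_{\bbP}[{\rm Var}(Y\mid\bX)]$ between the additive least-squares residual of Theorem~\ref{thm:RenyiLB} and the Bayes residual; that gap vanishes exactly when the best additive predictor coincides with the conditional expectation.
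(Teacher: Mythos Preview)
Your proposal is correct and follows essentially the same approach as the paper: both exploit that for binary $Y$ the function $g$ is forced, reduce ${\rho_{m}}$ to $\sup_f\mathbb{E}[f(\bX)u(\bX)]$ with $u(\bX)=\mathbb{E}_\bbP[g(Y)\mid\bX]$, and then use the equality case of Cauchy--Schwarz (equivalently, your projection identity $\|\Pi_{\mathcal{F}}u\|=\|u\|\Leftrightarrow u\in\mathcal{F}$) to characterize when the separable restriction costs nothing. Your write-up is in fact slightly more careful than the paper's in one respect: you explicitly justify that the minimum over $\mathcal{C}$ is attained via continuity of $\bbP\mapsto{\rho_{m}}(\bX,Y)^2=1-\mathbb{E}_\bbP[\mathrm{Var}(Y\mid\bX)]\big/(\bbP(Y{=}0)\bbP(Y{=}1))$ on the compact polytope, a step the paper leaves implicit in its ``only if'' direction.
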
 
\begin{proof}
First, consider the probability distribution $\bbP$ for which $\mathbb{E}_\bbP \left[Y\, \big| \, \bX\right] = \sum_{i=1}^p f_i(X_i)$. Notice that, when $Y$ is binary, the function $g(Y) = \frac{Y - \mathbb{E}[Y]}{\sqrt{{\rm Var}(Y)}}$ is the only  feasible function $g(\cdot)$ in \eqref{eq:RenyiDef}. Therefore, the HGR correlation coefficient between $\bX$ and $Y$ can be calculated by
\begin{align}
{\rho_{m}}(\bX,Y) =  \max_{f} \quad & \mathbb{E} \left[f(\bX) \frac{Y - \mathbb{E}[Y]}{\sqrt{{\rm Var}(Y)}} \right]  \label{eq:BinRenyi}\\
{\rm s.t.}\;\; & \mathbb{E} \left[f(\bX)\right] = 0, \;\;\mathbb{E} \left[f^2(\bX)\right] = 1, \nonumber
\end{align}
where the expectations are taken with respect to the probability distribution $\bbP$. Furthermore, the objective can be rewritten as
\begin{align}
& \mathbb{E} \left[f(\bX) \frac{Y - \mathbb{E}[Y]}{\sqrt{{\rm Var}(Y)}} \right] =   \mathbb{E} \left[\mathbb{E}\left[f(\bX) \frac{Y - \mathbb{E}[Y]}{\sqrt{{\rm Var}(Y)}} \,\big| \, \bX\right]\right]  \nonumber\\
=\; & \mathbb{E} \left[f(\bX) \mathbb{E}\left[\frac{Y - \mathbb{E}[Y]}{\sqrt{{\rm Var}(Y)}} \,\big| \, \bX\right]\right] \\
= \; &\mathbb{E} \left[ f(\bX) \left(\frac{\sum_{i=1}^p f_i(X_i) - \mathbb{E}[\sum_{i=1}^pf_i(X_i)]}{\sqrt{{\rm Var}(Y)}}\right)\right].\nonumber
\end{align}
A simple application of the Cauchy-Schwarz inequality implies that the optimizer of  \eqref{eq:BinRenyi} is of the form
\[
f^*(\bX) = \frac{\sum_{i=1}^p f_i(X_i) - \mathbb{E}[\sum_{i=1}^p f_i(X_i)] }{\sqrt{{\rm Var} (\sum_{i=1}^p f_i(X_i))}},
\]
which is in a separable form. Therefore, $f^*(\cdot)$ is feasible to \eqref{eq:RlbDef} and ${\rho_{m}}(\bX,Y)  = \rho_{m}^{\rm lb}(\bX,Y)$.

To show the other direction, notice that the above Cauchy-Schwarz inequality holds with equality if and only if for a constant $c$
\begin{equation}
f(\bX) = c\,\mathbb{E}\left[\frac{Y - \mathbb{E}[Y]}{\sqrt{{\rm Var}(Y)}} \,\big| \, \bX\right],
\end{equation}
with probability one. Hence if ${\rho_{m}}(\bX,Y)  = \rho_{m}^{\rm lb}(\bX,Y)$ there is a separable function function $f^*(\bX)$, a probability distribution $\mathbb{P}^* \in \mathcal{C}$, and a constant $c^*$ such that
\begin{equation}
f^*(\bX) = c^*\,\mathbb{E}_{\mathbb{P}^*}\left[\frac{Y - \mathbb{E}[Y]}{\sqrt{{\rm Var}(Y)}} \,\big| \, \bX\right]
\end{equation}
which implies that $\mathbb{E}_{\mathbb{P}^*}\left[Y\,\big| \, \bX\right]$ is a separable function of $X_i$'s. 
\end{proof}
In the following theorem, we introduce another necessary and sufficient condition under which the lower bound becomes tight. Before stating the result, let us define the convex function  $h(\mathbf{z}): \reals^{mp\times 1} \mapsto \reals$ as
\[
h(\mathbf{z})\triangleq\sum_{i=1}^{p}{\max\lbrace \mathbf{z}_{mi-m+1},\mathbf{z}_{mi-m+2},\ldots,\mathbf{z}_{mi}\rbrace}.
\]
\begin{thm}
\label{thm:SuffNecCond}
Assume $\mathcal{C} \neq \emptyset$. Then, the lower-bound ${\rho_{m}^{\rm lb, \mathcal{C}}}$  is tight
if and only if there exists a solution $\bz^*$ to \eqref{eq:QuadraticProblem} satisfying $h(\bz^*) \leq 1/2$ and $h(-\bz^*)\leq 1/2$. In other words, if and only if the following identity holds:
\begin{align} 
\gamma^{\rm lb} =  \min_{\bz}\;\; & {\bz^T \bQ \bz + \mathbf{f}^T \bz + \frac{1}{4}}, \label{eq:SuffCondThm4}\\
{\rm s.t.} \quad & h(\bz)\leq 1/2, \;{\rm and}\; h(-\bz) \leq 1/2. \nonumber
\end{align}
\end{thm}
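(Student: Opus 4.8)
The plan is to reduce Theorem~\ref{thm:SuffNecCond} to Theorem~\ref{thm:SuffNecCond_part1} by translating the statement ``$\mathbb{E}_\bbP[Y\mid\bX]$ is separable for some $\bbP\in\mathcal{C}$'' into the language of the quadratic program~\eqref{eq:QuadraticProblem}. First I would fix the dictionary between a vector $\mathbf{z}\in\reals^{pm}$ and a separable function $f(\bX)=\sum_{i=1}^{p}\xi_i(X_i)$ via $\xi_i(k)=\mathbf{z}_{m(i-1)+k}$. Unwinding the definitions of $\bQ$ and $\mathbf{d}$, and using that $Y$ is $\{0,1\}$-valued so that $(2Y-1)^2=1$, one obtains, for every $\bbP\in\mathcal{C}$ (which is legitimate since the right-hand side only involves the pairwise marginals, held fixed on $\mathcal{C}$),
\[
\mathbf{z}^T\bQ\mathbf{z}+\mathbf{d}^T\mathbf{z}+\tfrac14=\mathbb{E}_{\bbP}\bigl[(f(\bX)+Y-\tfrac12)^2\bigr].
\]
Thus $\gamma^{\rm lb}$ is the minimal mean-square error of approximating $\tfrac12-Y$ by a separable function; a minimizer $\mathbf{z}^*$ of~\eqref{eq:QuadraticProblem} corresponds to the best separable $L^2$ approximant $f^*$ of $\tfrac12-Y$; and, by the Pythagorean decomposition through $\mathbb{E}_\bbP[Y\mid\bX]$, the function $g^*(\bX):=\tfrac12-f^*(\bX)$ is the best separable approximant of $\mathbb{E}_\bbP[Y\mid\bX]$, agreeing with it exactly when the latter is itself separable. (This is the regression picture of Remark~\ref{Remark: QuadraticEquivalenceIndicator}.)

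The second ingredient is purely combinatorial: since $\mathbb{X}^p$ is a product set, $h(\mathbf{z})=\sum_{i=1}^{p}\max_k\xi_i(k)=\max_{\bx\in\mathbb{X}^p}\sum_{i=1}^p\xi_i(x_i)=\max_{\bx}f(\bx)$, and likewise $h(-\mathbf{z})=-\min_{\bx}f(\bx)$. Hence $h(\mathbf{z})\le\tfrac12$ and $h(-\mathbf{z})\le\tfrac12$ hold simultaneously if and only if $f$ takes all of its values in $[-\tfrac12,\tfrac12]$, equivalently $g=\tfrac12-f$ takes all of its values in $[0,1]$ --- precisely the range a conditional expectation of the binary $Y$ must occupy. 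So the identity~\eqref{eq:SuffCondThm4} simply asserts that $\gamma^{\rm lb}$ is attained by some $\mathbf{z}$ whose associated $g$ is $[0,1]$-valued on the whole alphabet.

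For the ``if'' direction I would argue constructively. Given a minimizer $\mathbf{z}^*$ with $h(\pm\mathbf{z}^*)\le\tfrac12$, the function $g^*$ is separable and $[0,1]$-valued. Take any $\bbP_0\in\mathcal{C}$ (nonempty by hypothesis) and define $\bbP$ by keeping the marginal $\bbP_0(\bX=\cdot)$ and declaring $\bbP(Y=1\mid\bX=\bx)=g^*(\bx)$; this is a genuine distribution precisely because $g^*\in[0,1]$. Then $\bbP$ automatically reproduces every $(X_i,X_j)$ marginal, and that it also reproduces every $(X_i,Y)$ marginal is exactly the coordinatewise form of the first-order optimality condition $2\bQ\mathbf{z}^*+\mathbf{d}=\mathbf{0}$, namely $\mathbb{E}_{\bbP_0}[\mathbf{1}\{X_i=k\}\,g^*(\bX)]=\bbP_0(X_i=k,Y=1)$ for all $i,k$. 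Hence $\bbP\in\mathcal{C}$ with $\mathbb{E}_\bbP[Y\mid\bX]=g^*$ separable, and Theorem~\ref{thm:SuffNecCond_part1} gives tightness.

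Conversely, if the bound is tight, Theorem~\ref{thm:SuffNecCond_part1} supplies $\bbP\in\mathcal{C}$ with $\mathbb{E}_\bbP[Y\mid\bX]=\sum_i f_i(X_i)=:g(\bX)$; binariness of $Y$ forces $0\le g\le 1$ on the support of $\bbP_\bX$. Setting $f^*:=\tfrac12-g$ and letting $\mathbf{z}^*$ be its coefficient vector, the orthogonality $\mathbb{E}_\bbP[\mathbf{1}\{X_i=k\}(Y-g(\bX))]=0$ is precisely $2\bQ\mathbf{z}^*+\mathbf{d}=\mathbf{0}$, so $\mathbf{z}^*$ minimizes the convex program, and the range bound on $g$ yields $h(\pm\mathbf{z}^*)\le\tfrac12$. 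The step I expect to need the most care is this last one: the inequalities $h(\pm\mathbf{z}^*)\le\tfrac12$ demand that $g$ stay within $[0,1]$ over the \emph{entire} product alphabet $\mathbb{X}^p$, whereas Theorem~\ref{thm:SuffNecCond_part1} only controls $g$ on the support of one particular $\bbP_\bX$. I would close this gap by exploiting the freedom in the choice of minimizer --- the minimizers form $\mathbf{z}^*+\ker\bQ$, with $\ker\bQ$ consisting of the separable functions vanishing on that support --- to select a representative whose separable extension still lands in $[0,1]$ (or, failing that, by invoking a mild non-degeneracy of the pairwise marginals that makes the relevant joint supports rich enough). Everything else is routine bookkeeping around the dictionary of the first paragraph.
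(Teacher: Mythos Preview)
Your approach is essentially identical to the paper's: both directions go through Theorem~\ref{thm:SuffNecCond_part1} via the quadratic identity $\mathbf{z}^T\bQ\mathbf{z}+\mathbf{d}^T\mathbf{z}+\tfrac14=\mathbb{E}_\bbP\bigl[(f(\bX)+Y-\tfrac12)^2\bigr]$, the combinatorial reading $h(\bz)=\max_{\bx}f(\bx)$, the same constructive distribution $\bbP(Y=1\mid\bX=\bx)=g^*(\bx)$ built on top of an arbitrary member of $\mathcal{C}$ in the ``if'' direction, and the same verification of $2\bQ\bz^*+\mathbf{d}=\mathbf{0}$ from the orthogonality of $Y-g(\bX)$ to indicators in the converse. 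The support issue you flag in the converse is real and the paper simply writes $h(\bz^*)=\tfrac12-\bbP^*(Y=1\mid\bX=\bx^*)\le\tfrac12$ without addressing whether the argmax $\bx^*$ lies in the support of $\bbP^*_{\bX}$, so your treatment is, if anything, more careful than the published one.
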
 

\begin{proof}
The proof can be found in the Appendix.
\end{proof}

%
%
 
Notice that in light of Theorems \ref{thm:SuffNecCond_part1} and \ref{thm:SuffNecCond}, we could identify in polynomial time whether there exists a probability distribution $\bbP\in \mathcal{C}$ with an additive model structure.  Now it is interesting to investigate whether the conditions in  Theorem~\ref{thm:SuffNecCond} are satisfied for most of the classes, or happens for a negligible class of distributions. To  formally study this question, let us start from a probability distribution $\mathbb{P}_0$. Since $\bX \in \mathbb{X}^p$ with $|\mathbb{X}|=m$ and $Y\in \{0,1\}$,  this probability distribution can be uniquely identified by a vector  $\bp_0\in \reals_+^{ 2m^p}$ with $\mathbf{1}^T \bp_0 = 1$, where $\mathbf{1}$ is the vector of all one. Define $\mathcal{C}(\bp_0)$ to be the class of probability distributions having the same pairwise marginals as $\bp_0$. Having this definition in our hands,  the following result follows:
\begin{thm}
For the uniform probability vector  $\tilde{\bp}$, there exists an $\epsilon >0 $ such that for any probability vector $\bp$ with $\|\bp - \tilde{\bp}\|_1<\epsilon$, the class $\mathcal{C}(\bp)$ contains an additive distribution, i.e., $\exists \;\hat{\bbP} \in \mathcal{C}(\bp)$ with $\mathbb{E}_{\hat{\bbP}} [Y \, \big| \, \bX] = \sum_{i=1}^p f_i(X_i)$ for some functions $f_i:\mathbb{X} \mapsto \reals, \;i=1,\ldots,p$.
\end{thm}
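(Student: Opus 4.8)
The plan is to exhibit an additive member of $\mathcal{C}(\bp)$ explicitly, by perturbing around the uniform law $\tilde{\bp}$, for which the product distribution is trivially additive since $\mathbb{E}_{\tilde{\bbP}}[Y\mid\bX]\equiv\tfrac12=\sum_{i=1}^p\tfrac{1}{2p}$. First I would keep the $\bX$-marginal equal to $\bp_{\bX}$, the $\bX$-marginal of $\bp$ itself: this makes every $X_i$--$X_j$ constraint automatic, and since $\|\bp_{\bX}-\tilde{\bp}_{\bX}\|_1\le\|\bp-\tilde{\bp}\|_1<\epsilon$, the marginal $\bp_{\bX}$ has full support on $\mathbb{X}^p$ once $\epsilon$ is small. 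It then remains to pick an additive conditional law $\bbP(Y=1\mid\bX=\bx)=\sum_{i=1}^p f_i(x_i)$ realizing the remaining constraints $\bbP(X_i=x_i,Y=1)=\mu^{i}_{x_i 1}$. After marginalizing over $\bX$, these become the linear system $\bQ(\bp_{\bX})\,\mathbf{f}=\boldsymbol{\nu}$, where $\bQ(\bp_{\bX})$ is precisely the matrix of Theorem~\ref{thm:RenyiLB} built from $\bp_{\bX}$, the vector $\mathbf{f}$ collects the unknowns $f_i(x_i)$, and $\boldsymbol{\nu}_{m(i-1)+k}=\bbP_{\bp}(X_i=k,Y=1)$.

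Next I would show this system is always consistent when $\bp_{\bX}$ has full support. Since $\bQ(\bbP_{\bX})$ is the second--moment matrix of the indicator vector $\widetilde{\bX}$, we have $\bQ(\bbP_{\bX})\mathbf{v}=\mathbf{0}$ iff $\sum_{i} v_{m(i-1)+x_i}=0$ for every $\bx$ in the support, which for full support forces $\mathbf{v}$ to be block--constant with block values summing to zero. Hence $\ker\bQ(\bbP_{\bX})$ is the \emph{same} $(p-1)$-dimensional subspace for all full-support $\bbP_{\bX}$, and $\operatorname{range}\bQ(\bbP_{\bX})=(\ker)^{\perp}=\{\mathbf{w}:\ \sum_{k} w_{m(i-1)+k}\ \text{is independent of}\ i\}$. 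Since $\sum_{k}\boldsymbol{\nu}_{m(i-1)+k}=\bbP_{\bp}(Y=1)$ for every $i$, it follows that $\boldsymbol{\nu}\in\operatorname{range}\bQ(\bp_{\bX})$, so a solution $\mathbf{f}$ exists; two solutions differ by a kernel vector, which leaves the function $\bx\mapsto\sum_i f_i(x_i)$ unchanged, and at $\bp=\tilde{\bp}$ one checks directly that this function is identically $\tfrac12$ (e.g.\ $\mathbf{f}=\tfrac{1}{2p}\mathbf{1}$ solves the system).

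The last step is the only genuinely quantitative one: I need $\sum_i f_i(x_i)\in(0,1)$ for every $\bx$, so that $\hat{\bbP}(\bX=\bx,Y=1)=\bp_{\bX}(\bx)\sum_i f_i(x_i)$ together with $\hat{\bbP}(\bX=\bx,Y=0)=\bp_{\bX}(\bx)\bigl(1-\sum_i f_i(x_i)\bigr)$ define a genuine distribution. Because the kernel of $\bQ(\bp_{\bX})$ does not move as $\bp$ varies near $\tilde{\bp}$, the rank of $\bQ(\bp_{\bX})$ is locally constant, so the Moore--Penrose solution $\mathbf{f}(\bp)=\bQ(\bp_{\bX})^{+}\boldsymbol{\nu}(\bp)$ is continuous at $\tilde{\bp}$; hence $\sum_i f_i(x_i)\to\tfrac12$ uniformly over the finitely many $\bx$, and for $\epsilon$ small we get $\sum_i f_i(x_i)\in(0,1)$ for all $\bx$. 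The resulting $\hat{\bbP}$ lies in $\mathcal{C}(\bp)$ --- its $\bX$-marginal is $\bp_{\bX}$ and its $X_i$--$Y$ marginals match by the choice of $\mathbf{f}$ --- and $\mathbb{E}_{\hat{\bbP}}[Y\mid\bX]=\sum_i f_i(X_i)$ is additive, as required. I expect this last control of $\mathbf{f}$ to be the main obstacle: consistency of the linear system and positivity of $\bp_{\bX}$ are robust, but keeping $\sum_i f_i$ a valid conditional probability is what forces the closeness hypothesis, and it is exactly what the rank-stability of $\bQ$ and continuity of the pseudoinverse deliver.
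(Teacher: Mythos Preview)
Your argument is correct and follows the same core strategy as the paper: show that the matrix $\bQ$ has locally constant rank near the uniform law, invoke continuity of the Moore--Penrose pseudoinverse, and conclude that the additive conditional expectation stays in $[0,1]$ for small perturbations. The differences are in packaging. The paper routes the argument through Theorem~\ref{thm:SuffNecCond}: it checks that at the uniform point the minimizer $\bz^*=\mathbf{0}$ of \eqref{eq:QuadraticProblem} satisfies $h(\pm\bz^*)\le 1/2$, then bounds $\max\{h(\bz^*),h(-\bz^*)\}\le\|\bQ^{\dagger}\mathbf{d}\|_1$ and uses rank lower--semicontinuity (together with the a priori upper bound ${\rm rank}(\bQ)\le (m-1)p+1$) to get constant rank nearby. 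You instead construct the additive distribution directly---essentially replaying the converse direction of Theorem~\ref{thm:SuffNecCond}'s proof---and your rank argument is sharper: you identify $\ker\bQ(\bbP_{\bX})$ explicitly as the block--constant vectors with zero block--sum for \emph{every} full--support $\bbP_{\bX}$, which immediately gives constant rank without the semicontinuity detour and simultaneously shows $\boldsymbol{\nu}\in\operatorname{range}\bQ$. Your route is more self--contained (no appeal to Theorem~\ref{thm:SuffNecCond}) and makes the quantitative step transparent; the paper's route has the advantage of tying the result back to the computable criterion \eqref{eq:SuffCondThm4}.
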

\begin{proof}
Let $\tilde{\bp}$ be the uniform distribution over $(\mathbf{X},Y)$, i.e. all $(\mathbf{x},y)$'s take the same probability. Define $\tilde{\bQ} $ and $\tilde{\mathbf{f}}$ to be the matrix and vector in \eqref{eq:QuadraticProblem} obtained for this uniform distribution.  Clearly, $\tilde{\mathbf{f}} = 0$  and therefore the objective of \eqref{eq:QuadraticProblem} is minimized at $\mathbf{z}=\mathbf{0}$ for which $h(\mathbf{z})=h(-\mathbf{z})=0<1/2$.

Note that for any probability distribution and for any $i=1,\ldots ,p$ the columns $m(i-1)+1,\ldots ,mi$ of the defined matrix $\mathbf{Q}$ sum to the unit vector. Therefore,  ${\rm rank}(\tilde{\mathbf{Q}})<(m-1)p+1$. Furthermore, it is easy to check that for the vector $\tilde{\bp}$, ${\rm rank}(\tilde{\mathbf{Q}})$ achieves this upper bound. The reason is that if we subtract the column $mi+j$ from column $mi+j-1$ for any $j=2,\ldots ,m$ and $i=0,\ldots ,p-1$, we obtain a vector taking $\frac{1}{m}$ at $mi+j-1$, $\frac{-1}{m}$ at $mi+j$, and zero elsewhere; which leads to $(m-1)p$ independent vectors. 
Also note that any linear combination of these vectors sums to zero which shows that the dimension of the column space of $\tilde{\mathbf{Q}}$ is at least $(m-1)p+1$, i.e., ${\rm rank}(\tilde{\mathbf{Q}})=(m-1)p+1$. 

Since the function ${\rm rank}(\cdot)$ is lower-continuous, there exists an $\epsilon>0$ for which  ${\rm rank}(\tilde{\mathbf{Q}})\le {\rm rank}({\mathbf{Q}})$ for any $\bQ$ coming from the probability vector $\mathbf{p}$ with $\|\mathbf{p} - \tilde{\mathbf{p}}\|_1 < \epsilon$. Combining the fact that ${\rm rank}(\bQ)\le (m-1)p+1$ for any $\bp$ with the lower continuity of the ${\rm rank}(\cdot)$ function implies that ${\rm rank}(\mathbf{Q})=(m-1)p+1$ for small enough $\epsilon$. Therefore, Moore-Penrose pseudoinverse ${\tilde{\mathbf{Q}}}^{\dagger}$ behaves continuously under small perturbations in probability distribution; see \cite{stewart1969continuity}. Thus for any $\mathbf{p}$ in $\epsilon$-neighborhood of $\tilde{\mathbf{p}}$, we have that $ \big| \Vert {\mathbf{Q}}^{\dagger}\mathbf{f} \|_1- \|\tilde{\mathbf{Q}}^{\dagger}\tilde{\mathbf{f}} \Vert_{1} \big|\le 1/2$. Noticing the fact that $\max\{h(\mathbf{z}^*), h(-\mathbf{z}^*)\}\le \| {\mathbf{Q}}^{\dagger}\mathbf{f} \|_1$ completes the proof.
%
\end{proof}
The above result simply states that the set of distributions $\bp$ leading to the pairwise class $\mathcal{C}(\bp)$  containing an additive distribution has a positive Lebesgue measure over the simplex of probability vectors.  Few remarks are in order:
\begin{rmk}
In the continuous case, the distribution with additive structure always exists in our fixed class of distributions since the jointly Gaussian distribution has additive form and there always exists a jointly Gaussian distribution with a given valid first and second order moments.
\end{rmk}

\begin{rmk}
The proposed lower bound ${\rho_{m}}_{\rm lb}$ is not always a solution to \eqref{eq:MinRenyiPairwise}. Consider the binary valued random variables $X_1$, $X_2$, $Y$ coming from the joint distribution $\tilde{P}$ with
\begin{equation} \nonumber
\begin{split}
& \mathbb{\tilde{P}}_{X_1,X_2,Y}(0,0,0)=0,\; \mathbb{\tilde{P}}_{X_1,X_2,Y}(0,0,1)=0.1 \\
& \mathbb{\tilde{P}}_{X_1,X_2,Y}(1,0,0)=0.2,\; \mathbb{\tilde{P}}_{X_1,X_2,Y}(1,0,1)=0.2 \\
& \mathbb{\tilde{P}}_{X_1,X_2,Y}(0,1,0)=0.1,\; \mathbb{\tilde{P}}_{X_1,X_2,Y}(0,1,1)=0.3 \\
& \mathbb{\tilde{P}}_{X_1,X_2,Y}(1,1,0)=0.1,\; \mathbb{\tilde{P}}_{X_1,X_2,Y}(1,1,1)=0.
\end{split}
\normalsize
\end{equation}  
Now consider the class of distributions obtained by the pairwise marginals of $\tilde{\bbP}$. Then for any distribution $\bbP$ in our class, we have 
\begin{equation}
\nonumber
\begin{split}
&\;\mathbb{P}_{X_1,X_2,Y}(0,0,0) + \mathbb{P}_{X_1,X_2,Y}(1,1,1) \\
=& \; \mathbb{P}_{X_1,X_2}(1,1) -  \mathbb{P}_{X_1,Y}(1,0) + \mathbb{P}_{X_2,Y}(0,0) \\
=&\;\mathbb{\tilde{P}}_{X_1,X_2}(1,1) -  \mathbb{\tilde{P}}_{X_1,Y}(1,0) + \mathbb{\tilde{P}}_{X_2,Y}(0,0)   \\
=& \;\mathbb{\tilde{P}}_{X_1,X_2,Y}(0,0,0) + \mathbb{\tilde{P}}_{X_1,X_2,Y}(1,1,1) = 0.
\end{split}
\end{equation}
Since both of $\mathbb{P}(X_1=0,X_2=0,Y=0),\, \mathbb{P}(X_1=1,X_2=1,Y=1) $ are non-negative, they should be both zero. Combining this fact and the fact that $\tilde{\bbP}$ and $\bbP$ have the same set of pairwise marginals implies that $\bbP = \tilde{\bbP}$. In other words, our class of distributions with the given pairwise marginals is a singleton. Furthermore, 
 the conditional  probability  $\tilde{\bbP}(Y \,\big| \,\bX)$ is given by
\begin{equation}
\begin{split}
\tilde{\mathbb{P}}(Y=1\vert X_1=0,X_2=0) &= 1\\
\tilde{\mathbb{P}}(Y=1\vert X_1=1,X_2=0) &= 1/2\\
\tilde{\mathbb{P}}(Y=1\vert X_1=0,X_2=1) &= 3/4\\
\tilde{\mathbb{P}}(Y=1\vert X_1=1,X_2=1) &= 0\\
\end{split}
\end{equation}
Therefore, $\mathbb{E}(Y\vert X_1=1,X_2=1)+\mathbb{E}(Y\vert X_1=0,X_2=0) \neq \mathbb{E}(Y\vert X_1=1,X_2=0) +\mathbb{E}(Y\vert X_1=0,X_2=1)$ which means that our distribution does not have an additive form, i.e. there is no distribution with additive structure in our class. Hence, this example illustrates a scenario where the proposed lower bound is not tight.
\end{rmk}

\bibliographystyle{IEEEbib}
\bibliography{biblio}
%
%
%
\newpage
\section{Appendix}
\subsection{Proof of Theorem 2}
First let us define  $\mathbf{w} \in \reals^{pm\times 1}$ to be the  indicator random vector for $\mathbf{X}$, i.e., 
\begin{equation}
w_{m(i-1)+k}=\mathbb{I}(X_i=k)=\begin{cases} 
\begin{split}
1 &\quad \text{\rm if}\; X_i=k \\
0 &\quad \text{\rm otherwise}.
\end{split}
\end{cases}
\end{equation}
Clearly,  every separable function of $\bX$ would be a linear function on $\mathbf{W}$ and therefore for every $f\in \mathcal{F}$ there exists a vector $\mathbf{a}$ for which
\begin{equation}
f(\bX)=\mathbf{a}^T\mathbf{w}.
\end{equation}
Also notice that, when $Y$ is binary, the variable $\tilde{Y} = \frac{Y - \mathbb{E}[Y]}{\sqrt{{\rm Var}(Y)}}$ is the only feasible function of $Y$ in \eqref{eq:RenyiDef}. Therefore,
\begin{equation}
\label{eq:tempProof2}
\begin{split}
{\rho_{m}^{\rm lb}}(\bX,Y) = \max_{\mathbf{a}}\quad &{\mathbb{E}\left[\mathbf{a}^T\mathbf{w}\tilde{Y}\right]} \\
{\rm s.t. }\quad   &  \mathbb{E}_\bbP [\mathbf{a}^T\mathbf{w}] = 0, \\
  &\mathbb{E}_\bbP [\mathbf{a}^T\mathbf{w}\mathbf{w}^T\mathbf{a}]  = 1. 
  \end{split}
\end{equation}
Based on the definition of  $\mathbf{Q}$ and $\mathbf{d}$, for any $\mathbb{P}\in \mathcal{C}$ we have
\begin{equation}
\mathbf{Q}=\mathbb{E}_{\mathbb{P}}(\mathbf{w}\mathbf{w}^T),\quad \mathbf{d^{'}}=\mathbb{E}_{\mathbb{P}}(\tilde{Y}\mathbf{w}).
\end{equation}
where
\begin{equation}
\mathbf{d^{'}}=\frac{\frac{1}{2}\mathbf{d} + \left(\frac{1}{2} - \mathbb{P}(Y=1)\right) \mathbb{E}(\mathbf{w})}{\sqrt{\mathbb{P}(Y=0) \mathbb{P}(Y=1)}}. 
\end{equation}
First notice that  there exists $\mathbf{u^{'}}$ for which $\mathbf{Q}\mathbf{u^{'}}=\mathbf{d^{'}}$ since otherwise the objective function in \eqref{eq:tempProof2} would not be bounded from above. Furthermore,  $\mathbb{E}(\mathbf{w})$ is in the column space of $\mathbf{Q}$ and therefore there exists $\mathbf{u}$  with $\mathbf{Q}\mathbf{u}=\mathbf{d}$. With this in mind, the optimization problem \eqref{eq:tempProof2} can be rewritten as
\begin{equation}
\begin{split}
{\rho_{m}^{\rm lb}}(\bX,Y) = \max_{\mathbf{a}}\quad & \mathbf{a}^T\mathbf{d^{'}} \\
{\rm s.t. }\quad   &  \mathbf{a}^T\mathbb{E} [\mathbf{w}] = 0, \\ \quad &
 \mathbf{a}^T\mathbf{Q}\mathbf{a} \le 1. 
  \end{split}
\end{equation}
So the Lagrangian would be
\begin{equation}
\mathcal{L}(\mathbf{a},\beta,\lambda)= \mathbf{a}^T\mathbf{d^{'}} + \lambda ( 1-  \mathbf{a}^T\mathbf{Q}\mathbf{a} ) + \beta \mathbf{a}^T\mathbb{E} [\mathbf{w}] 
\end{equation}
and thus the Lagrange-dual function is
\begin{equation}
\begin{split}
h(\beta,\lambda)&=  \sup_{\mathbf{a}}\,\mathcal{L}(\mathbf{a},\beta,\lambda) \\
&= \sup_{\mathbf{a}}\, \mathbf{a}^T\mathbf{d^{'}} + \lambda ( 1-  \mathbf{a}^T\mathbf{Q}\mathbf{a} ) + \beta \mathbf{a}^T\mathbb{E} [\mathbf{w}] \\
&= \lambda + \frac{1}{4\lambda}\left(\mathbf{d^{'}}+\beta \mathbb{E} [\mathbf{w}] \right)^T \mathbf{Q}^{\dagger}\left(\mathbf{d^{'}}+\beta \mathbb{E} [\mathbf{w}] \right),
\end{split}
\end{equation}
where $\mathbf{Q}^{\dagger}$ denotes the Moore-Penrose pseudoinverse of $\mathbf{Q}$. Note that $\mathbf{Q}$ is not a full rank matrix and thus is not invertible. Here we have used that $\mathbf{Q}$ is positive semi-definite and that $\mathbf{u}^{'}$ exists where $\mathbf{Q}\mathbf{u}^{'}=\mathbf{d}^{'}$ to show that the above quadratic function in the supremum is upper-bounded, and therefore the last equality holds. In addition, notice that for any $\mathbf{v}^{'}$ where $\mathbf{Q}\mathbf{v}^{'}=\mathbf{d}^{'}$, including $\mathbf{Q}^{\dagger}\mathbf{d}^{'}$ since $\mathbf{u}$ exists that $\mathbf{Q}\mathbf{u}=\mathbf{d}$, we have
\begin{equation}
\begin{split}
&\mathbf{Q}\mathbf{v}^{'}=\mathbf{d}^{'} \\
\Rightarrow\; & \mathbb{E}\left[{\mathbf{w}\mathbf{w}^T}\right]\mathbf{v}^{'} = \mathbb{E}\left[{\tilde{Y}\mathbf{w}}\right] \\
\Rightarrow\; & \mathbf{1}^T\mathbb{E}\left[{\mathbf{w}\mathbf{w}^T}\right]\mathbf{v}^{'} = \mathbf{1}^T\mathbb{E}\left[{\tilde{Y}\mathbf{w}}\right] \\
\Rightarrow\; &\mathbb{E}\left[{ \mathbf{1}^T\mathbf{w}\mathbf{w}^T}\right]\mathbf{v}^{'} = \mathbb{E}\left[{\tilde{Y}(\mathbf{1}^T\mathbf{w})}\right] \\
\Rightarrow\; &\mathbb{E}\left[{ \mathbf{w}^T}\right]\mathbf{v}^{'} = \mathbb{E}[\, {\tilde{Y}} \,] = 0,
\end{split}
\end{equation} 
where $\mathbf{1}$ is a vector with all $1$ entries. Therefore,
\begin{equation} \label{eq: Qpseudo1}
\mathbb{E}\left[{ \mathbf{w}}\right]^T\mathbf{Q}^{\dagger}\mathbf{d}^{'}=0.
\end{equation}
Similarly, we can show
\begin{equation} \label{eq: Qpseudo2}
\begin{split}
&\mathbb{E}\left[{ \mathbf{w}}\right]^T\mathbf{Q}^{\dagger}\mathbb{E}\left[{ \mathbf{w}}\right] = 1, \\
&\mathbb{E}\left[{ \mathbf{w}}\right]^T\mathbf{Q}^{\dagger}\mathbf{d} = \mathbb{P}(Y=1)-\mathbb{P}(Y=0).
\end{split}
\end{equation}
Let $p_1=\mathbb{P}(Y=1)$ and $p_0=\mathbb{P}(Y=0)$. To solve the dual problem, we have
\begin{align}
{\rho_{m}^{\rm lb}}(\bX,Y) & = \min_{\lambda \ge 0,\, \beta}\, h(\beta,\lambda) \nonumber \\
& =    \min_{\beta}\, \min_{\lambda \ge 0}\, h(\beta,\lambda) \nonumber \\
& \stackrel{(a)}{=}   \min_{\beta} \sqrt{\left(\mathbf{d^{'}}+\beta \mathbb{E} [\mathbf{w}] \right)^T \mathbf{Q}^{\dagger}\left(\mathbf{d^{'}}+\beta \mathbb{E} [\mathbf{w}] \right)} \nonumber \\
& = \sqrt{\min_{\beta}\, \left(\mathbf{d^{'}}+\beta \mathbb{E} [\mathbf{w}] \right)^T \mathbf{Q}^{\dagger}\left(\mathbf{d^{'}}+\beta \mathbb{E} [\mathbf{w}] \right)} \nonumber \\
& \stackrel{(b)}{=}  \sqrt{ \mathbf{d^{'}}^T \mathbf{Q}^{\dagger} \mathbf{d^{'}} +\min_{\beta}\, \beta^{2} } \nonumber \\
& = \sqrt{ \mathbf{d^{'}}^T \mathbf{Q}^{\dagger} \mathbf{d^{'}}} 
 \\
& \stackrel{(c)}{=} \sqrt{\frac{\mathbf{d}^T \mathbf{Q}^{\dagger} \mathbf{d} -2\left(1 - 2p_1\right)^2  + \left(1 - 2p_1\right)^2  }{4p_0p_1} } \nonumber \\
& \stackrel{(d)}{=} \sqrt{\frac{1-4\gamma^{\text{\rm lb}} - \left(1 - 2p_1\right)^2  }{4p_0p_1} } \nonumber \\
& = \sqrt{1 - \frac{\gamma^{\text{\rm lb}}}{p_0p_1}}. \nonumber
\end{align}
Here $(a)$ uses the fact that, for any $\eta > 0$, the minimum value of $x+\frac{\eta}{4x}$ over $x\in \reals^{+}$ is $\sqrt{\eta}$. $(b)$ and $(c)$ follow from \eqref{eq: Qpseudo1}, \eqref{eq: Qpseudo2}, and that $p_0+p_1=1$. $(d)$ holds since, as shown above, $\mathbf{Q}$ is a positive-semidefinite matrix whose column space includes $\mathbf{d}$, hence
\begin{equation}
\gamma^{\text{\rm lb}}=\frac{1}{4}\left(1 - \mathbf{d}^T \mathbf{Q}^{\dagger} \mathbf{d} \right).
\end{equation} 
Therefore, the proof is complete.

\subsection{Proof of Theorem 4}
Similar to the proof provided for Theorem 2, let $\mathbf{w}\in \reals^{pm\times 1}$ be the vector of indicator variables, i.e. $w_{(i-1)m+k}=\mathbb{I}(X_i=k)$ is the indicator variable for $X_i=k$. Define $b=Y-\frac{1}{2}$. It can be seen that
\begin{equation}
0\le \mathbb{E}_{\mathbb{P}}\left[ \left( \mathbf{w}^T\mathbf{z}-b \right)^2\right] = {\bz^T \bQ \bz + \mathbf{d}^T \bz + \frac{1}{4}},
\end{equation}
for any $\mathbb{P} \in \mathcal{C}$. Therefore, assuming $\mathcal{C} \neq \emptyset$, the above quadratic function takes its minimum value at any $\bz$ satisfying 
\begin{equation}
2\mathbf{Q}\bz + \mathbf{d} =\mathbf{0}.
\end{equation}
Due to Theorem \ref{thm:SuffNecCond_part1}, if ${\rho_{m}^{\rm lb,\mathcal{C}}}$ becomes tight  there exists a $\mathbb{P}^*\in \mathcal{C}$ where
\begin{equation}
\mathbb{E}_{\bbP^{*}} \left[Y\, \big| \, \bX\right] = \sum_{i=1}^p f^*_i(X_i).
\end{equation}
Let $\bz^*= (\bz_i^*)$ is defined by $z^{*}_{(i-1)m+k}=\frac{1}{2p}-f^*_i(k)$. Then, there exist $\bx^*$ and $\bx^{**}$ for which
\begin{equation}
\begin{split}
h(\bz^*)&=\frac{1}{2}-\,\mathbb{E}_{\bbP^{*}} \left[Y\, \big| \, \bX=\bx^*\right] \\
& = \frac{1}{2}-\,{\bbP^{*}}(Y=1\vert \bX=\bx^*) \le \frac{1}{2}
\end{split}
\end{equation}
and
\begin{equation}
\begin{split}
h(-\bz^*)&=-\frac{1}{2}+\,\mathbb{E}_{\bbP^{*}} \left[Y\, \big| \, \bX=\bx^{**}\right] \\
& = -\frac{1}{2}+\,{\bbP^{*}}(Y=1\vert \bX=\bx^{**}) \le \frac{1}{2}.
\end{split}
\end{equation}
Note that the $\left((i-1)m+k\right)$-th entry of $2\mathbf{Q}\bz^* + \mathbf{d}$ is
\begin{align}
&2\sum_{j,x}  \left[ \mathbb{P}^*(X_i=k,X_j=x)(-f^*_j(x)+\frac{1}{2p})\right] + \nonumber \\ 
&\quad \mathbb{P}^*(X_i=k,Y=1)-\mathbb{P}^*(X_i=k,Y=0) = \nonumber \\
& \mathbb{P}^*(X_i=k)-2\sum_{\mathbf{x}:\; x_i=k} \left[  \mathbb{P}^*(\bX=\bx)\mathbb{E}_{\bbP^{*}} \left[Y\, \big| \, \bX=\bx\right]\right] \nonumber \\
&+ \mathbb{P}^*(X_i=k,Y=1) - \mathbb{P}^*(X_i=k,Y=0) = \nonumber \\
& \mathbb{P}^*(X_i=k) -2\,\mathbb{P}^*(X_i=k,Y=1)  + \\
& \quad \mathbb{P}^*(X_i=k,Y=1) -  \mathbb{P}^*(X_i=k,Y=0) =\,0. \nonumber
\end{align}

Therefore, $\bz^*$ satisfies the first order optimality condition of \eqref{eq:QuadraticProblem} with $h(\bz^*)\le 1/2$ and $h(-\bz^*)\le 1/2$. Consequently,  \eqref{eq:SuffCondThm4} holds. 

Now consider the converse direction. Assume there exists such a minimizer $\bz^*$, we can consider the joint distribution $\mathbb{P}^*$ defined as
\begin{equation} \label{NewProbDist}
\mathbb{P}^*(\bX=\bx,Y=y)=\left(\frac{1}{2}-(-1)^y{\bz^*}^T\mathbf{w}^*_{\bx}\right)\mathbb{Q}(\bX=\bx)
\end{equation}
where $\mathbf{w}^*_{\bx}$ denotes the vector of indicator variables for $\bx$, and $\mathbb{Q}$ is a probability  distribution in $\mathcal{C}$ which we supposed is not empty. Note that according to \eqref{NewProbDist}, since $h(\bz^*)\le 1/2$ and $h(-\bz^*)\le 1/2$ hold, $\mathbb{P}^*$ is a valid joint distribution for which we can simply verify
\begin{equation}
\begin{split}
\mathbb{P}^*(X_i=k,X_j=l)&=\mathbb{Q}(X_i=k,X_j=l) \\
&=\mathbb{P}(X_i=k,X_j=l)
\end{split}
\end{equation}
for every $i,j=1,\cdots ,p$ and $k,l=1,\cdots ,m$. Also,
\begin{equation}
\begin{split}
\mathbb{P}^*(X_i=k,Y=1)&=\sum_{\bx : x_i=k} {\left(\frac{1}{2}+{\bz^*}^T\mathbf{w}^*_{\bx}\right)\mathbb{Q}(\bX=\bx)} \\
&=\frac{1}{2}\,\mathbb{P}^*(X_i=k) + (\mathbf{Q}\bz^*)_{(i-1)m+k} \\
&=\frac{1}{2}\,\mathbb{P}^*(X_i=k) - \frac{1}{2}\,\mathbf{f}_{(i-1)m+k} \\
&=\mathbb{P}(X_i=k,Y=1),
\end{split}
\end{equation}
and thus $\mathbb{P}^*\in \mathcal{C}$. Notice that
\begin{equation}
\mathbb{P}^*(Y=1\vert \bX=\bx)=\frac{1}{2}+{\bz^*}^T\mathbf{w}^*_{\bx}
\end{equation}
which shows $\mathbb{P}^*$ has a separable conditional expectation, and hence due to Theorem \ref{thm:SuffNecCond_part1}, the lower-bound ${\rho_{m}^{\rm lb,\mathcal{C}}}$ is tight. 

\end{document}